\newtheorem{theorem}{Theorem}
\theoremstyle{definition}
\newtheorem{definition}{Definition}
\newcommand{\pr}{\textsc{PageRank}} 
\newcommand{\fpr}{\textsc{Diverse Centrality}}
\newcommand{\prs}{\textsc{PR}} 
\newcommand{\fprs}{\textsc{DC}} 
\newcommand{\dbc}{\textsc{Diverse Betweenness Centrality}}
\newcommand{\dbcs}{\textsc{DBC}}
\newcommand{\bc}{\textsc{Betweenness Centrality}}
\newcommand{\bcs}{\textsc{BC}}
\newcommand{\rwnode}{\textsc{Reweight Node Bias}}
\newcommand{\rwneighbor}{\textsc{Reweight Neighbor Bias}}
\newcommand{\fr}{\textsc{Fully Random}}
\newcommand{\pa}{\textsc{Preferential Attachment}}
\newcommand{\ba}{\textsc{Polarity Attachment}}
\newcommand{\clp}{\textsc{Change Local Polarity}}
\newcommand{\cnp}{\textsc{Change Neighborhood Polarity}}
\newcommand{\cc}{\textsc{9 Clusters}}
\tikzset{circle split part fill/.style  args={#1,#2}{%
 alias=tmp@name, 
  postaction={%
    insert path={
     \pgfextra{%
     \pgfpointdiff{\pgfpointanchor{\pgf@node@name}{center}}%
                  {\pgfpointanchor{\pgf@node@name}{east}}%
     \pgfmathsetmacro\insiderad{\pgf@x}
      \fill[#1] (\pgf@node@name.base) ([xshift=-\pgflinewidth]\pgf@node@name.east) arc
                          (0:180:\insiderad-\pgflinewidth)--cycle;
      \fill[#2] (\pgf@node@name.base) ([xshift=\pgflinewidth]\pgf@node@name.west)  arc
                           (180:360:\insiderad-\pgflinewidth)--cycle;            
         }}}}}  
\title{Centrality with Diversity}
\date{\today}
\def\@fnsymbol#1{\ensuremath{\ifcase#1\or \dagger\or \ddagger\or
\mathsection\or \mathparagraph\or \|\or **\or \dagger\dagger
\or \ddagger\ddagger \else\@ctrerr\fi}}
\newcommand*\samethanks[1][\value{footnote}]{\footnotemark[#1]}
\author{
Liang Lyu\thanks{Duke University. \tt{lianglyu1998@gmail.com, \{btfain, kamesh, knwang\}@cs.duke.edu}.}
\and
Brandon Fain\samethanks[1]
\and
Kamesh Munagala\samethanks[1]
\and
Kangning Wang\samethanks[1]
}
\begin{document}

\maketitle

\begin{abstract}
    Graph centrality measures use the structure of a network to quantify central or ``important'' nodes, with applications in web search, social media analysis, and graphical data mining generally. Traditional centrality measures such as the well known \pr{} interpret a directed edge as a vote in favor of the importance of the linked node. We study the case where nodes may belong to diverse communities or interests and investigate centrality measures that can identify nodes that are simultaneously important to many such diverse communities. We propose a family of diverse centrality measures formed as fixed point solutions to a generalized nonlinear eigenvalue problem. Our measure can be efficiently computed on large graphs by iterated best response and we study its normative properties on both random graph models and real-world data. We find that we are consistently and efficiently able to identify the most important diverse nodes of a graph, that is, those that are simultaneously central to multiple communities.
\end{abstract}

\section{Introduction}
\label{sec:motivation}
A fundamental question in network analysis concerns \emph{graph centrality}: Which nodes are ``important'' to understanding the overall network structure? Particular measures of centrality vary from node degrees to citation indices in publication networks~\cite{Hirsch05} to the celebrated \pr{} originally developed for web search~\cite{Page99}. In such work, it is typically assumed that the graph is composed of a single homogeneous community of nodes. We explore the question of centrality in a context of graphs with many heterogeneous communities and propose to answer the question:
\begin{quote}
Which nodes are simultaneously ``important'' to multiple diverse communities within the graph?
\end{quote}

As a concrete example, consider the Twitter graph of members of the United States Congress (where the members are nodes and directed edges represent following relationships on Twitter). Traditional graph centrality measures attempt to quantify the importance of particular members within the social network of the Congress as a whole. It is well known that the United States Congress is (roughly) composed of two communities: Democrats (or liberals) and Republicans (or conservatives). We are interested in the following question: Who are the most important members in the Congress, both in terms of network structure and bipartisanship? 
We address this particular example in our real-world experiments in Section~\ref{sec:real}. As another example, suppose we study publication networks composed of several sub-disciplines and are interested in identifying papers or authors who are most important to several sub-disciplines simultaneously or who serve as interdisciplinary bridges between publication communities.

To address these questions, we consider a model where in addition to a directed graph, we have a score vector for each node corresponding to community affiliation. These score vectors are real-valued and denote the degree of affiliation with different communities. We say that a node is \textit{polarized} or \textit{biased} if it is mostly affiliated with a particular community, and \textit{balanced} if it is comparably affiliated with all communities. We begin with the same basic intuition as other classic measures of graph centrality, that a directed edge from one node to another constitutes a ``citation'' or ``vote in favor'' of the target node. In the standard \pr{} model, the centrality is a fixed point solution that sets the measure of a node $u$ to the average of the measures of the nodes that cite $u$. Thus, a node could have a high \pr{} even if it is only cited by members of its own community. On the other hand, in our model, a node should only receive a high \fpr{} if it is cited by nodes of high \fpr{} score from multiple communities. Informally, the \fpr{} of a node $u$ therefore is the \emph{minimum} over communities of the average \fpr{} of the nodes of that community that cite $u$. We present the formal and more general definition in Section~\ref{sec:model}.

\subsection{Contributions and Outline} 
Our first contribution in Section~\ref{sec:model} is the proposal of a general family of centrality measures called \fpr{} that capture the idea of centrality with diversity in graphs with community structure. \fpr{} is the fixed point solution to a generalized nonlinear eigenvalue problem; we first show that such a solution is guaranteed to exist and provide an iterative algorithm for computing a solution. 

In Section~\ref{sec:experiments} we define our simulation setup for studying \fpr{} on random graphs motivated by real-world networks, including Erd\H{o}s-R\`{e}nyi graphs~\cite{erdos59a}, preferential attachment graphs~\cite{ba}, and a polarity attachment variant inspired by affiliation networks~\cite{Lattanzi} that we introduce. In Section~\ref{subsec:iterations}, we show empirically that \fpr{} can be computed efficiently via iterated best response and that it converges to a unique equilibrium.

In Section~\ref{subsec:specific}, we present three criteria that a \fpr{} measure should satisfy. (1) Centrality: The measure should be greater for nodes with greater \pr{}. (2) Local Polarity: The measure should be higher for balanced nodes than polarized nodes. (3) Neighborhood Polarity: The measure should be higher for nodes with diverse citations. We show by simulations on the random graph models that in contrast to simple methods of re-weighting \pr{}, \fpr{} simultaneously satisfies all of these properties. An added advantage of \fpr{} compared to simple re-weighting is that it propagates diverse influence in the sense that it amplifies scores of nodes that are pointed to by nodes that are also influential and balanced.



In Section~\ref{sec:real} we study \fpr{} on real-world graphs. We consider two graphs: a network of political blogs and the Twitter network of the United States Congress. Both examples have externally verifiable community structure. We show that in addition to the aforementioned properties, 
the \fpr{} model favors nodes that are not only 
influential (in terms of \pr{}), but that are also bipartisan in their neighborhood structure.
Interestingly, we show that for the United States Congress, the top-scoring senators found by our algorithm correlate well with the senators that join the most number of bipartisan bills, while not all these senators are as highly ranked by standard \pr{}. 

Given our empirical observation that \fpr{} finds central nodes that also serve as bridges between communities, in Section~\ref{sec:dbc}, we propose and investigate a natural generalization of \bc{}~\cite{betweenness}, which we call \dbc{}. 
Though \dbc{} might appear better suited to capture our desiderata, we show \fpr{} dominates \dbc{} in our experiments, and is indeed the preferred method for achieving centrality with diversity.




\subsection{Related Work}
The theory and practice of \pr{}-based notions of graph centrality are significant topics in the literature on network analysis and data mining~\cite{Page99,Haveliwala02,Jeh03,Kamvar03,Jiang04}.
Much of the literature is concerned with the efficient computation of \pr{} on extremely large networks typical of the web graph. By contrast, we are primarily interested in understanding an alternate notion of centrality with diversity in a graph with community structure. 

There is a long  line of work on algorithms for community detection in social networks~\cite{Comm1,Comm2,Comm3,Comm4,Kloumann17}. 
In contrast with this work, we are interested in graphs with known community structure or affiliation and how that influences the determination of a diverse centrality measure.

There is also a rich literature on modeling how communities and polarizing opinions arise in evolving networks~\cite{D74,FJ90,HK02,WDA+02}. Polarization of opinions has become an increasingly important societal topic, since it is largely a corrosive process that leads to the splintering of society and the formation of opinion bubbles. Our paper can be viewed as taking a step towards mitigating the effect of polarization. Our centrality measure suggests a way to calculate influence in a ``depolarized'' manner by prioritizing nodes that are simultaneously important and bridge different viewpoints.

The \fpr{} measure can be viewed as enforcing the concept of max-min (or Rawlsian) fairness on centrality measures. Related notions of fairness have been widely studied in resource allocation~\cite{varian,budish2,envyFreeUpTo1}, classification~\cite{fairLossMinimization,calibration,DworkIndividual,Angwin,KleinbergMR}, 
and so on. The main difference in our work is that we enforce such fairness locally at each neighborhood in the graph, and examine the effect of propagating such a notion through the network. We show that such a process leads to interesting insights even in real-world networks.

The influential work of Everett and Borgatti~\cite{EB10} (see also~\cite{KK10}) proposes a number of measures for the centrality of a given group of network nodes. These are defined by extending the notions of degree centrality and \bc{} to measure the extent to which group members serve as ``bridge nodes'' in the network. For instance, group betweenness is defined as the sum over all pairs of nodes of the fraction of the shortest paths between those nodes that use group members.  Our focus is on the case where there are many groups and each node's group membership is {\em fractional}. In Section~\ref{sec:real}, we propose a natural generalization of \bc{} to our setting; however, we show empirically that it is not as effective as \fpr{}. 

Finally, just as \pr{} is a special case of the eigenvalue problem, the \fpr{} measure is a special case of  {\em non-linear eigenvalue} problems, which have a long and rich history in functional analysis and numerical methods; see~\cite{NLE} for a survey. In contrast to the standard eigenvalue problem that admits to efficient computation, there are no general efficient algorithms known for the non-linear generalizations. Nevertheless, we find that our notion of \fpr{} is simple enough that algorithms similar to those for computing \pr{} are effective in practice.

\section{\fpr{}}
\label{sec:model}
In this section we define \fpr{} as a fixed point solution to a generalized nonlinear eigenvalue problem and prove that a solution always exists. We then present an iterative algorithm to find such an equilibrium.


Let $(V,E)$ be a directed graph with $n$ nodes $V=\{1,2,\ldots,n\}$. Let $d_i$ be the outdegree of node $i$. In addition, we assume there are $K$ communities within the network. For each node $i$, there is an {\em affiliation vector} $\mathbf{q}^{(i)} \in \mathbb{R}_+^K$ with $\left\Vert \mathbf{q}^{(i)} \right\Vert_1 = 1$ characterizing the communities it belongs to. 

To be concrete, consider again the example of the Twitter graph of members of the US Congress. There are $K = 2$ communities. $q^{(i)}_1$ and $q^{(i)}_2 = 1 - q^{(i)}_1$ describe to what extent node $i$ belongs to the Democratic and Republican communities. A node with $\left|q^{(i)}_1 - q^{(i)}_2 \right|$ close to $1$ would be considered highly polarized or partisan; a value close to 0 would indicate balance or bipartisanship. 

Now we can define \fpr{}. To get the intuition, consider a single node $i$. We calculate the centrality of $i$ according to each community in turn by taking the average centrality of nodes that cite $i$, weighted by their affiliation with that community, and adding a damping term based on $i$'s own affiliation with that community. Then we take the minimum (or more generally some concave function) of these scores over all of the communities so that $i$'s centrality is limited by the communities for which $i$ is less central. This defines a nonlinear iterative procedure, to which \fpr{} is the fixed point solution. 

\begin{definition}
    \label{def:fairPageRank}
    The \emph{\fpr{}} of a graph $(V, E)$ with damping factor $p \in (0, 1)$  is a vector $\mathbf{s} \in \mathbb{R}_{\geq 0}^n$ such that $\left\Vert \mathbf{s} \right\Vert_1 = 1$, and for all $i \in V$,
    \[
    \lambda s_i = f\left( \left(1-p\right) \frac{\mathbf{q}^{(i)}}{n} + p \sum_{j:(j,i)\in E} \frac{s_j}{d_j} \mathbf{q}^{(j)} \right)
    \]
    where $f:\mathbb{R}^K_{\geq 0} \rightarrow \mathbb{R}_{\geq 0}$ is a concave function and $\lambda \in \mathbb{R}_+$ is the normalization constant. We require $f$ to map any strictly positive vector to a positive number, i.e. $x_i > 0 \ \forall i \in [K]$ implies $f(\mathbf{x}) > 0$.
\end{definition}

Note that if $K=1$ or $f(\mathbf{x}) = \Vert x \Vert_1$, then the notion becomes the standard \pr{}.  We are interested in the case where $K \geq 2$ and $f$ is concave to prioritize diversity. For example, we desire $f\big((0.5, 0.5)\big)$ to be greater than $f\big((0.1, 0.9)\big)$ or $f\big((0.9, 0.1)\big)$. Some convenient choices of $f$ include the minimum function and the geometric mean. In our experiments, we focus on the minimum. The damping terms of $\left(1-p\right) \frac{\mathbf{q}^{(i)}}{n}$ are similar to those in standard \pr{} -- $\frac{1-p}{n}$ -- but multiplied by $\mathbf{q}^{(i)}$ to account for the local polarity of the node. 
We use a damping factor of $p = 0.85$ for \pr{} and \fpr{} in our experiments.

\subsection{Existence of Equilibrium}
Now we establish the existence of \fpr{}.
\begin{theorem}
The \fpr{} of a directed graph always exists, that is, there exists $\mathbf{s}$ that satisfies Definition~\ref{def:fairPageRank}.
\end{theorem}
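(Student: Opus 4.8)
Since the statement is a fixed-point assertion, the plan is to cast it as a continuous self-map of the probability simplex and apply Brouwer's theorem — the same skeleton as the classical existence proof for \pr{}, with the nonlinear concave $f$ being the only new wrinkle. First I would fix the domain $\Delta = \{\mathbf{s} \in \mathbb{R}^n_{\geq 0} : \left\Vert \mathbf{s}\right\Vert_1 = 1\}$ and, for $\mathbf{s} \in \Delta$, define the unnormalized update
\[
y_i(\mathbf{s}) \;=\; f\!\left( (1-p)\,\frac{\mathbf{q}^{(i)}}{n} + p \sum_{j:(j,i)\in E} \frac{s_j}{d_j}\,\mathbf{q}^{(j)} \right), \qquad N(\mathbf{s}) = \sum_{k \in V} y_k(\mathbf{s}),
\]
together with $\Phi(\mathbf{s}) = N(\mathbf{s})^{-1}\big(y_1(\mathbf{s}), \dots, y_n(\mathbf{s})\big)$. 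A fixed point $\mathbf{s}^*$ of $\Phi$ is exactly a \fpr{}, with the normalization constant read off as $\lambda = N(\mathbf{s}^*)$: one has $\mathbf{s}^* \geq 0$, $\left\Vert \mathbf{s}^*\right\Vert_1 = 1$, and $\lambda s_i^* = y_i(\mathbf{s}^*)$ for every $i$, which is Definition~\ref{def:fairPageRank}.

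The bulk of the argument is then to verify that $\Phi$ is a well-defined continuous map $\Delta \to \Delta$. For well-definedness I would observe that, because each affiliation vector $\mathbf{q}^{(i)}$ is strictly positive and $p \in (0,1)$, the damping term alone already forces the argument of $f$ in $y_i(\mathbf{s})$ to be coordinatewise bounded below by $(1-p)\mathbf{q}^{(i)}/n$, hence to be a strictly positive vector, uniformly over $\mathbf{s} \in \Delta$; by the hypothesis that $f$ sends strictly positive vectors to positive reals, every $y_i(\mathbf{s})$ is positive, so $N(\mathbf{s}) > 0$. For continuity I would note that $\mathbf{s} \mapsto (\text{argument of } f)$ is affine, that a finite concave function is automatically continuous on the interior of its domain — here the open positive orthant, which contains all of these arguments — so each $y_i$ is continuous on $\Delta$; then $N$ is continuous and, being positive on the compact set $\Delta$, bounded away from $0$, so $\Phi$ is continuous. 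It maps into $\Delta$ since its coordinates are nonnegative and sum to $1$ by construction.

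With that in hand, Brouwer's fixed-point theorem applied to the nonempty compact convex set $\Delta$ produces the desired $\mathbf{s}^*$, completing the proof. It is worth flagging why the nonlinearity is harmless here: we never need $\Phi$ to be linear (as the eigenvalue proof for \pr{} exploits) nor monotone (a general concave $f$ need not be coordinatewise increasing, so an order-theoretic / Knaster--Tarski approach is unavailable); mere continuity, which concavity plus finiteness hands us on the open orthant, suffices.

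\textbf{Main obstacle.} The single delicate point is the non-vanishing of the normalizer $N(\mathbf{s})$, on which both well-definedness and continuity of $\Phi$ rest. This is precisely where the two hypotheses — strict positivity of the $\mathbf{q}^{(i)}$ and positivity of $f$ on positive inputs — do their work, keeping the arguments of $f$ bounded away from the boundary of the orthant, where a concave $f$ may vanish (as $\min$ and the geometric mean do) or even be discontinuous. If one dropped strict positivity of the affiliations, $N$ could be identically zero on $\Delta$; but then every point of the simplex satisfies Definition~\ref{def:fairPageRank} with $\lambda = 0$, so existence still holds, merely through this degenerate branch. Everything else in the proof is routine bookkeeping.
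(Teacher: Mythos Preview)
Your proposal is correct and follows essentially the same approach as the paper: define the normalized update map on the simplex and apply Brouwer's fixed-point theorem, using concavity of $f$ for continuity and the strict positivity hypothesis on $f$ together with the damping term to keep the normalizer bounded away from zero. You supply more detail than the paper does on why the argument of $f$ stays in the open positive orthant and why $N(\mathbf{s})>0$, but the skeleton is identical.
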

\begin{proof}
We use Brouwer's fixed-point theorem, which states that any continuous mapping from a compact and convex set to itself must have a fixed point, to show an equilibrium of \fpr{} always exists. Define function $G$ that maps the simplex $[0, 1]^n \cap \left\{ \mathbf{s}:\left\Vert \mathbf{s} \right\Vert_1 = 1 \right\}$ to itself such that $
G_i(\mathbf{s}) = \frac{t_i}{\sum_{j=1}^n t_j}$,    
where
\[
t_i = f\left(\left(1-p\right) \frac{\mathbf{q}^{(i)}}{n} + p \sum_{j:(j,i)\in E} \frac{s_j}{d_j} \mathbf{q}^{(j)} \right), \ \forall i \in [n].
\]

Intuitively, the function $G$ is almost $f$ except that it also maps the output of $f$ back to the simplex proportionally.
Note that by Definition~\ref{def:fairPageRank}, $f$ is concave and hence continuous; therefore, $G$ is also continuous as it is the ratio between two continuous positive functions (each $t_j$ is positive so the denominator is positive). Furthermore, $G_i(\mathbf{s}) > 0$ for all $i \in [n]$. The function $G$ is thus a continuous mapping from the compact and convex set $[0, 1]^n \cap \left\{ \mathbf{s}:\left\Vert \mathbf{s} \right\Vert_1 = 1 \right\}$ to itself. By Brouwer's fixed-point theorem, $G$ has a fixed point $\mathbf{s}$, i.e. $G(\mathbf{s}) = \mathbf{s}$. This $\mathbf{s}$ satisfies our definition of \fpr{} that $s_i$ is proportional to $t_i$.
\end{proof}


\subsection{Algorithm for \fpr{}}
We present Algorithm~\ref{alg:my} to compute \fpr{}. It starts from a random weight vector $\mathbf{s}^{(0)}$ and iteratively updates itself using the formula in Definition~\ref{def:fairPageRank}. It terminates when an iteration does not change the weight vector much: When the $L^1$-distance between the weight vectors of two iterations is smaller than the given precision $\epsilon$.

\begin{algorithm}[htbp]
\SetKwInOut{Input}{Input}
\SetKwInOut{Output}{Output}
\SetKwRepeat{Do}{do}{while}
\Input{$(V, E)$, $p$, $f(\cdot)$, $\epsilon$, $\left\{\mathbf{q}^{(i)}\right\}_{i \in [n]}$}
\BlankLine
Choose any $\mathbf{s}^{(0)} \in \mathbb{R}_{\geq 0}^n$ with $\left\Vert \mathbf{s}^{(0)} \right\Vert_1 = 1$, and set $k \gets 0$ \;
\Do{$\left\Vert \mathbf{s}^{(k)} - \mathbf{s}^{(k - 1)} \right\Vert_1 > \epsilon$}
{
$k \gets k + 1$\;
$t_i \gets f\left(\left(1-p\right) \frac{\mathbf{q}^{(i)}}{n} + p \sum_{j:(j,i)\in E} \frac{s^{(k - 1)}_j}{d_j} \mathbf{q}^{(j)} \right), \ \forall i \in [n]$\;
$s_i^{(k)} \gets \frac{t_i}{\sum_{j = 1}^n t_j}, \ \forall i \in [n]$\;
}
\Return  $\mathbf{s}^{(k)}$; 
\caption{\fpr{}}
\label{alg:my}
\end{algorithm}

Experimentally, we find that on most random graphs the algorithm converges to an equilibrium in a small number of iterations and the equilibrium is most likely unique. The experiments are described in detail in Section~\ref{subsec:iterations}.



%

\section{Simulation Setup} 
\label{sec:experiments}
In this section, we describe the simulation setup that we will use to study the convergence behavior of the algorithm, as well as its properties relative to simpler baselines. We perform this study using a collection of well-motivated random graph models.

For simplicity, all experiments in subsequent sections focus on the special case of $K=2$. We will refer to the two communities as blue and red, with $b_i$ and $r_i$ scores to denote node $i$'s affiliation to each community. Formally, $b_i = q^{(i)}_1$ and $r_i = q^{(i)}_2$. We will sometimes refer to these values as ``polarities'' of a node; note that having either $r_i$ or $b_i$ close to 1 indicates polarization within that community. We use a damping factor of $p = 0.85$ for \pr{} and \fpr{} in all the experiments in Sections~\ref{subsec:iterations} and~\ref{sec:real}.  Further, the function $f$ in Definition~\ref{def:fairPageRank} is chosen to be the minimum.

\subsection{Random Graph Models} 
\label{subsec:simplegraphs}
We use three classes of randomly generated undirected graphs, along with their modifications, in our experiments. 

A \fr{} graph is generated using the Erdős–Rényi model~\cite{erdos59a}. The graph contains $n=1000$ nodes, and each node's $r_i$ value is drawn from a uniform distribution on $(0,1)$ with $b_i = 1 - r_i$. Each edge is included with probability $e=0.2$.
    
A \pa{} graph is generated by a preferential attachment mechanism such as the Barabási–Albert model~\cite{ba}. The graph begins with a clique of $m=20$ nodes. New nodes are then added one at a time until a total of $n=1000$ nodes, and each new node $i$ is randomly connected to $m$ existing nodes with probability proportional to their degrees. Formally, the $m$ nodes are drawn from the distribution $p_j = \frac{d_j}{\sum_{k:k<i} d_k} \ \forall j<i$. All nodes in the graph have random polarities (i.e. $r_i$ is drawn uniformly at random from $(0,1)$). This generative model simulates social networks without any consideration of polarized community formation. 
    
A \ba{} graph has nodes that are connected based on how similar their polarities are, resulting in polarized clusters. This is motivated by affiliation networks~\cite{Lattanzi}. The graph contains $n=1000$ nodes with random polarities, and each pair of nodes $i$ and $j$ is connected with a probability of $p_{ij}=0.5\left( r_i r_j + b_i b_j \right)$. Nodes with similar polarities are thus more likely to be connected. This generative model simulates the creation of polarized neighborhoods in the graph, such as what may occur when people are more likely to connect to others with similar opinions as their own. 
%

\section{Convergence and Uniqueness} 
\label{subsec:iterations}
In this section, we compute \fpr{} scores and \pr{} scores on the random graphs introduced in Section \ref{subsec:simplegraphs}.  We compare the rate of convergence of the two models in order to assess the efficiency of Algorithm~\ref{alg:my}. We also examine the uniqueness of \fpr{} scores found by Algorithm~\ref{alg:my}.

\subsection{Rate of Convergence}
Figure~\ref{fig:iterations} shows the distribution of the number of iterations of \pr{} and \fpr{} on \fr{}, \pa{} and \ba{} graphs, when $\epsilon = 10^{-10}$. Each run uses a different randomly generated graph, and for both models the importance vector is initialized to $s_i = \frac{1}{n}$ for $1\leq i\leq n$.

\begin{figure}
    \centering
    \includegraphics[width=0.85\columnwidth]{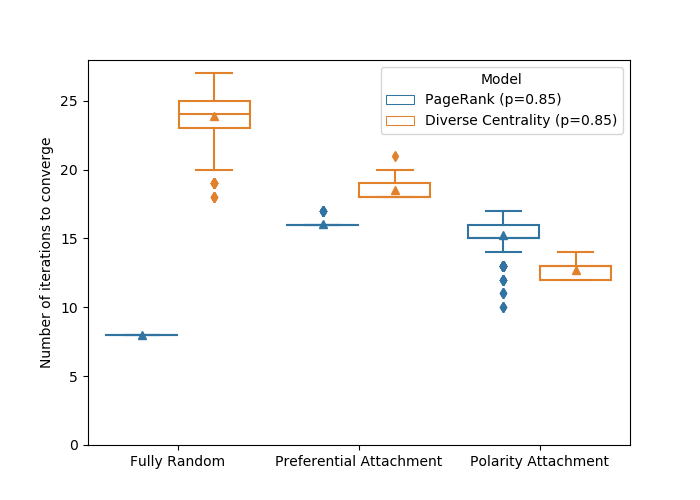}
    \caption{Number of iterations of \pr{} and \fpr{} on different graphs over 1000 runs.}
    \label{fig:iterations}
\end{figure}

The results suggest that not only does the \fpr{} model always converge to an equilibrium in practice, but its number of iterations is typically within a constant factor of \pr{}. The rate of convergence is the slowest on \fr{} graph with an average of $23.883$ iterations, or $2.99$ times of \pr{}. 


These experiments offer empirical evidence that the \fpr{} model converges in practice, and its rate of convergence is competitive compared to traditional centrality algorithms such as \pr{}. 

\subsection{Uniqueness of Equilibrium}
We also examine whether Algorithm~\ref{alg:my} always converges to a unique equilibrium independent of initialization. For this, we generate $600$ runs each for the $3$ graphs in Section~\ref{subsec:simplegraphs}. Each run uses a different random graph, and in each run Algorithm~\ref{alg:my} is executed twice, one with a constant initial vector $\mathbf{s}$ such that $s_i = \frac{1}{n} = 10^{-3} \ \forall 1\leq i\leq n$, and the other with a random $\mathbf{s}$. Each execution converges to a local equilibrium with $\epsilon = 10^{-10}$ ($L^1$-norm), so we calculate and analyze the differences of each node's scores between the two executions.

We also observe that the maximum difference of any node's score between the constant initialization and random initialization executions observed is $3.544 \times 10^{-11}$ (over all $1800$ runs on different graphs). This is insignificant compared to the average node score of $\frac{1}{n} = 10^{-3}$. 
Therefore, these experiments provide strong empirical evidence that Algorithm~\ref{alg:my} always converges to a unique global equilibrium, regardless of the initial vector $\mathbf{s}$. 

\section{Normative Properties}
\label{subsec:specific}
In this section, we compare the quality of the solution found by \fpr{} against simpler baselines. First, we propose three criteria that a centrality measure with diversity should satisfy:

\begin{description}
    \item[Centrality.] The measure should be higher for nodes that are more central according to standard graph centrality measures such as \pr{}.
    \item[Local Polarity.] The measure should be higher for nodes that belong to multiple communities in the sense of having balanced rather than polarized affiliation scores.
    \item[Neighborhood Polarity.] The measure should be higher for nodes that bridge communities in the sense that they are cited by multiple communities.
\end{description}

\subsection{Baseline Algorithms}    
We compare our \fpr{} model against three baseline centrality measures listed below. Note that by definition, the \pr{} model captures centrality, while \rwnode{} and \rwneighbor{} capture local polarity and neighborhood polarity respectively.

\begin{itemize}
    \item The  \pr{} model, with damping factor $p=0.85$.
    \item \rwnode{}: Re-weight the score vector of \pr{} proportional to $w_i = \min \{ r_i, b_i \}$. This means balanced nodes having greater $w_i$ will be considered more important.
    \item \rwneighbor{}: Re-weight the score vector of \pr{} proportional to $w_i = \min \left\{ \frac{R_i}{R_i+B_i}, \frac{B_i}{R_i+B_i} \right\}$, where
        \begin{align}
            R_i &= \sum_{\substack{j:(j,i)\in E\\j\neq i}} r_j + \sum_{\substack{j:(i,j)\in E\\j\neq i}} r_j,  \label{eq:R}\\
            B_i &= \sum_{\substack{j:(j,i)\in E\\j\neq i}} b_j + \sum_{\substack{j:(i,j)\in E\\j\neq i}} b_j.  \label{eq:B}
        \end{align}
    This means nodes in balanced neighborhoods which have a greater $w_i$ will be considered more important.
\end{itemize}

We show that \fpr{} captures all three criteria -- centrality, local polarity, and neighborhood polarity -- concurrently, while \pr{}, \rwnode{} and \rwneighbor{} cannot. Section~\ref{subsubsec:local} analyzes the effects of fixing neighborhood polarity and centrality while changing local polarity. Section~\ref{subsubsec:neighborhood} analyzes the effects of fixing local polarity and centrality while changing neighborhood polarity. Section~\ref{subsubsec:centrality} analyzes the effects of fixing local and neighborhood polarities while changing centrality. 

We further note that \fpr{} differs from the baselines in a more fundamental way. As mentioned before, it propagates influence in the sense that it amplifies the score of nodes that are themselves pointed to by nodes that are central and have balanced polarity. This feature is not shared by simple modifications of \pr{} based on node or neighborhood polarity.

\subsection{Local Polarity} 
\label{subsubsec:local}
Consider a bidirectional graph \clp{} with $2000$ nodes. The graph is first generated using the \fr{} model; then, $600$ nodes are chosen uniformly at random and divided into three sets:  
\begin{itemize}
    \item $V_1$ with $150$ nodes, each with polarity  $r_i = 0.99, b_i = 0.01$.
    \item $V_2$ with $300$ nodes, each with polarity  $r_i = b_i = 0.5$.
    \item $V_3$ with $150$ nodes, each with polarity $r_i = 0.01, b_i = 0.99$.
\end{itemize}

Since the \fr{} graph does not consider polarities during random generation, $V_1, V_2$ and $V_3$ will have close to balanced neighborhoods. The random selection of nodes ensures that centrality of these nodes will span the entire range of centrality values as measured by \pr{}.

\begin{figure}
    \centering
    \includegraphics[width=0.85\columnwidth]{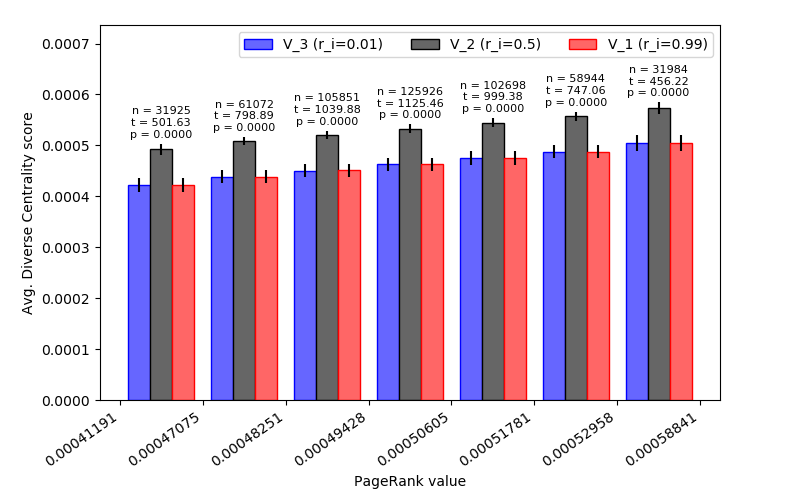}
    \caption{Average \fpr{} scores of nodes in \clp{} graph given range of \pr{} values and local polarity (864 runs).}
    \label{fig:change_local_fair}
\end{figure}

\begin{figure}
    \centering
    \includegraphics[width=0.85\columnwidth]{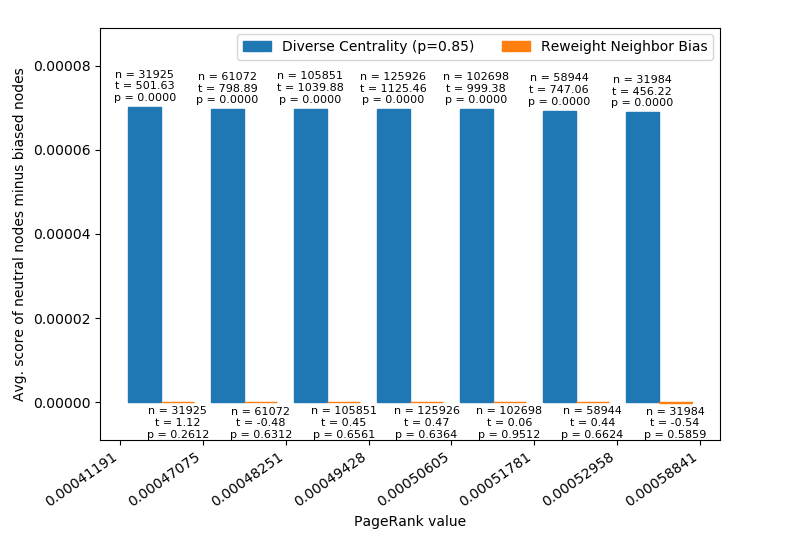}
    \caption{Average scores of balanced nodes minus polarized nodes in \clp{} graph, given range of \pr{} values, for \fpr{} and \rwneighbor{} models (864 runs).}
    \label{fig:change_local_compare}
\end{figure}

Figure~\ref{fig:change_local_fair} shows the average scores of nodes in $V_1$, $V_2$ and $V_3$ given by \fpr{}. The spectrum of \pr{} values is first divided into 7 buckets;\footnote{The 7 buckets are obtained by first dividing the entire range of \pr{} values evenly into 15 intervals, and then merging the first 5 and last 5 intervals which contain too few nodes individually.} nodes within each bucket are further grouped by their local polarity (i.e. whether they are in $V_1$, $V_2$ or $V_3$), and for each given bucket, the average \fpr{} scores of nodes from $V_1$, $V_2$ and $V_3$ are shown with the red, grey and blue bars respectively. Welch's $t$-tests are performed within each bucket, comparing all polarized nodes against all balanced nodes.

For both \fpr{} and \rwneighbor{}, we then compare the differences in scores between balanced nodes and polarized nodes in Figure~\ref{fig:change_local_compare}. Within each of the 7 buckets, for each model, the average scores of balanced nodes minus the average scores of polarized nodes is plotted, with $t$-tests performed similarly to Figure~\ref{fig:change_local_fair}.

Within each bucket of \pr{} values, nodes have approximately constant neighborhood polarity and centrality, but different local polarity. Both Figures~\ref{fig:change_local_fair} and~\ref{fig:change_local_compare} show that the \fpr{} model ranks balanced nodes in $V_2$ higher than polarized nodes in $V_1$ and $V_3$, and the difference is statistically significant in all 7 buckets. Conversely, the \rwneighbor{} model gives nodes in each bucket approximately the same scores regardless of their local polarity, with none of the differences being significant. 

This shows that our \fpr{} model reflects the differences in local polarities of nodes even when centrality and neighborhood polarity are held constant, while the \rwneighbor{} model fails to capture local polarity.

\subsection{Neighborhood Polarity} 
\label{subsubsec:neighborhood}


\begin{figure}[htbp]
    \centering
    \includegraphics[width=0.85\columnwidth]{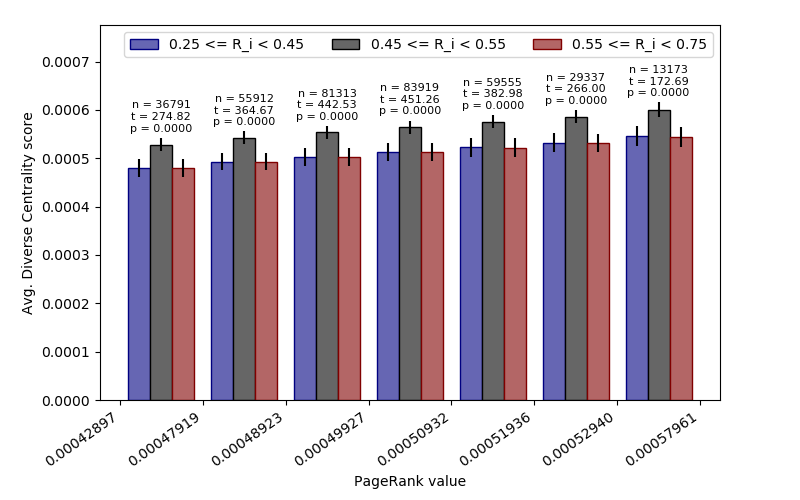}
    \caption{Average \fpr{} scores of nodes in \cnp{} graph given range of \pr{} values and neighborhood polarity (600 runs).}
    \label{fig:change_neighborhood_fair}
\end{figure}


Consider a bidirectional graph \cnp{} with $2000$ nodes. The graph is first generated using the \ba{} model to allow polarized clusters to form. Then, a set of $600$ nodes $V_0$ are chosen uniformly at random, and their polarities are changed to balanced with $r_i = b_i = 0.5$.

Although all nodes in $V_0$ have the same local polarity, their neighborhood polarity differs since neighborhood polarity in the \ba{} graph strongly correlates with their original polarities that were generated randomly.

Figure~\ref{fig:change_neighborhood_fair} shows the average scores of nodes in $V_0$ given by \fpr{}. Similar to Section~\ref{subsubsec:local}, the spectrum of \pr{} values is divided into 7 buckets. Nodes within each bucket are further grouped by their neighborhood polarity: node $i$ with $0.45 \leq R_i/(R_i+B_i) \leq 0.55$ is considered to be in a balanced neighborhood, where $R_i$ and $B_i$ are defined in Eq~(\ref{eq:R}) and~(\ref{eq:B}); otherwise, it is considered to be in a polarized neighborhood. Empirically, around $42\%$ of nodes in each bin are classified as in balanced neighborhoods.

For each given bucket, the average \fpr{} scores of nodes in red, balanced and blue neighborhoods (over 600 runs) are shown with the corresponding bars. Welch's $t$-tests are performed within each bucket, comparing all nodes in balanced neighborhoods against nodes in polarized neighborhoods.

Within each bucket of \pr{} values, nodes have the same local polarity and approximately constant centrality, but different neighborhood polarity. Figures~\ref{fig:change_neighborhood_fair} show that \fpr{} ranks nodes in balanced neighborhoods higher than nodes in polarized neighborhoods, with statistical significance in all buckets.


For both \fpr{} and \rwnode{} models, we compare the differences in scores between nodes in balanced and polarized neighborhoods, similarly to Figure~\ref{fig:change_local_compare}. Within all 7 buckets, the average \fpr{} scores of nodes in balanced neighborhoods is significantly higher than nodes in polarized neighborhoods, with differences ranging from $6.890 \times 10^{-5}$ to $7.011 \times 10^{-5}$. On the other hand, nodes in balanced neighborhoods typically have \textit{lower} \rwnode{} scores than those in polarized neighborhoods, with a much smaller magnitude of difference from $2.887 \times 10^{-9}$ to $1.192 \times 10^{-7}$.



These results show that our \fpr{} model captures the differences in neighborhood polarities of nodes even when centrality and local polarity are held constant, and gives balanced neighborhoods a higher score than polarized neighborhoods. \fpr{} thus captures our neighborhood polarity criterion, but \rwnode{} does not. 

\subsection{Centrality} 
\label{subsubsec:centrality}
Given that our \fpr{} algorithm has a similar formulation to standard \pr{}, one would naturally expect \fpr{} scores to correlate with \pr{}. Specifically, in the case where all nodes in the graph are balanced with $r_i = b_i = 0.5$, Definition~\ref{def:fairPageRank} reduces to \pr{} with damping factor $p$. However, we would like to analyze the relationship with centrality in more general, nontrivial cases where nodes may have varying polarities, and observe the effects of changing centrality on a subset of nodes whose local and neighborhood polarities are fixed.

\begin{figure}
    \begin{center}  
    \resizebox{0.8\columnwidth}{!}{%
        \begin{tikzpicture}
          [scale=.8,auto=left]
          \node[circle, draw=red, align=center, scale=0.7, minimum size=2cm] (A1) at (-3,3.2) {$50$ nodes};
          \node[circle, draw=blue, align=center, scale=0.7, minimum size=1.8cm] (A1blue) at (-3,3.2) {};
          \node[circle, draw=black, align=center, scale=0.7, minimum size=2cm] (B1) at (0,3.2) {$50$ nodes};
          \node[circle, draw=red, align=center, scale=0.7, minimum size=2cm] (C1) at (3,3.2) {$50$ nodes};
          \node[circle, draw=blue, align=center, scale=0.7, minimum size=1.8cm] (C1blue) at (3,3.2){};
          
          \node[circle, draw=red, align=center, minimum size=2cm] (A2) at (-4,0) {$150$ nodes};
          \node[circle, draw=blue, align=center,  minimum size=1.8cm] (A2blue) at (-4,0) {};
          \node[circle, draw=black, align=center, minimum size=2cm] (B2) at (0,0) {$150$ nodes};
          \node[circle, draw=red, align=center, minimum size=2cm] (C2) at (4,0) {$150$ nodes};
          \node[circle, draw=blue, align=center, minimum size=1.8cm] (C2blue) at (4,0) {};
          
          \node[circle, draw=red, align=center, scale=1.3, minimum size=2cm] (A3) at (-5,-4) {$450$ nodes};
          \node[circle, draw=blue, align=center, scale=1.3, minimum size=1.8cm] (A3blue) at (-5,-4) {};
          \node[circle, draw=black, align=center, scale=1.3, minimum size=2cm] (B3) at (0,-4) {$450$ nodes};
          \node[circle, draw=red, align=center, scale=1.3, minimum size=2cm] (C3) at (5,-4) {$450$ nodes};
          \node[circle, draw=blue, align=center, scale=1.3, minimum size=1.8cm] (C3blue) at (5,-4){};
          
          \node at (-4.5,3.2) {$A_1$};
          \node at (0.7,2.1) {$B_1$};
          \node at (4.5,3.2) {$C_1$};
          \node at (-5.8,0) {$A_2$};
          \node at (0.7,-1.5) {$B_2$};
          \node at (5.8,0) {$C_2$};
          \node at (-7.3,-4) {$A_3$};
          \node at (0.7,-6) {$B_3$};
          \node at (7.3,-4) {$C_3$};
          
          \draw[-] (A1) to (B1);
          \draw[-] (B1) to (C1);
          \draw[-] (A2) to (B2);
          \draw[-] (B2) to (C2);
          \draw[-] (A3) to (B3);
          \draw[-] (B3) to (C3);
          \draw[-] (B1) to (B2);
          \draw[-] (B2) to (B3);
        \end{tikzpicture}
    }
    \end{center}
    \caption{\cc{} graph.}
    \label{fig:9-clusters}
\end{figure}

\begin{figure}
    \centering
    \includegraphics[width=0.85\columnwidth]{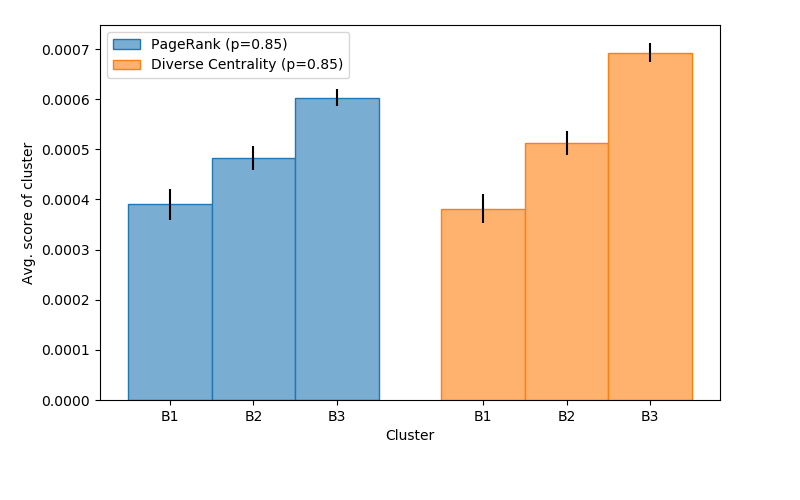}
    \caption{Average scores of $B_1$, $B_2$ and $B_3$ in \cc{} graph from \pr{} and \fpr{} (300 runs).}
    \label{fig:9_clusters}
\end{figure}

Consider a bidirectional graph \cc{} whose general structure is shown in Figure~\ref{fig:9-clusters}. The graph contains nine clusters: nodes in the central clusters $B_1$, $B_2$ and $B_3$ all have polarities $r_i = b_i = 0.5$, while nodes in other clusters have random polarities. Each pair of nodes within the same cluster have a probability of $0.5$ of being connected, while each pair of nodes across two clusters adjacent in Figure~\ref{fig:9-clusters} have a probability of $0.1$ of being connected.

All nodes in $B_1$, $B_2$ and $B_3$ are balanced, and have largely balanced neighborhoods due to randomness.\footnote{Empirically, over 300 runs, $R_i/(R_i+B_i)$ values of nodes in $B_1$, $B_2$ and $B_3$ have a mean of $0.499987$ and standard deviation of $0.007689$.} However, the three clusters have different centrality with $B_3$ being the most central and $B_1$ the least.

Figure~\ref{fig:9_clusters} shows the average scores given to each of $B_1$, $B_2$, $B_3$ by \pr{} and \fpr{}. The \pr{} scores confirm the difference in centrality between the three clusters. More importantly, \fpr{} also shows an increase in scores from $B_1$ to $B_3$, similar to \pr{}. This shows that \fpr{} reflects the differences in centrality of nodes when both local and neighborhood polarities are held largely constant, giving important nodes a higher score.


\section{Real-world Graphs} 
\label{sec:real}
In this section we apply the \fpr{} to some graphs generated from real-world networks, to show its performance on these datasets and compare the results with conventional wisdom regarding the networks. We choose two real-world graphs where centrality and polarization are relevant: the network of members of the United States Congress and a network of political blogs. 

Before presenting our results, we first present a natural generalization of a different and equally classic centrality notion, \bc~\cite{betweenness}, to capture diversity.  Subsequently, we will evaluate the performance of \fpr{} using \dbc{} along with \pr{} as benchmarks.

\subsection{\dbc{}}
\label{sec:dbc}
The \bc{} of a vertex $v$ quantifies the frequency it appears on the shortest path of two other vertices. Formally,
\[
\bcs(v) := \sum_{s\neq v\neq t} \delta_{st}(v)
\]
where $\delta_{st}(v)$ is the fraction of $s$-$t$ shortest paths passing through $v$.

We naturally generalize it into \dbc{} to incorporate diversity considerations. We define
\[
\dbcs(v) := \sum_{s\neq v\neq t} \delta_{st}(v) |r_s - r_t|
\]
to measure how often it appears on the shortest path of two other vertices, while giving larger weights if those two vertices are from different communities. 

\dbc{} can be computed in $O(nm)$ time on unweighted graphs, where $n$ and $m$ are the numbers of nodes and edges respectively, using a simple modification from Brandes' algorithm \cite{Brandes01}. In practice, it is typically slower than \fpr{} which takes $O(m)$ per iteration.

\subsection{Congress Graph}

\begin{figure}
    \centering
    \begin{subfigure}{0.32\columnwidth}
        \centering
        \includegraphics[width=\columnwidth]{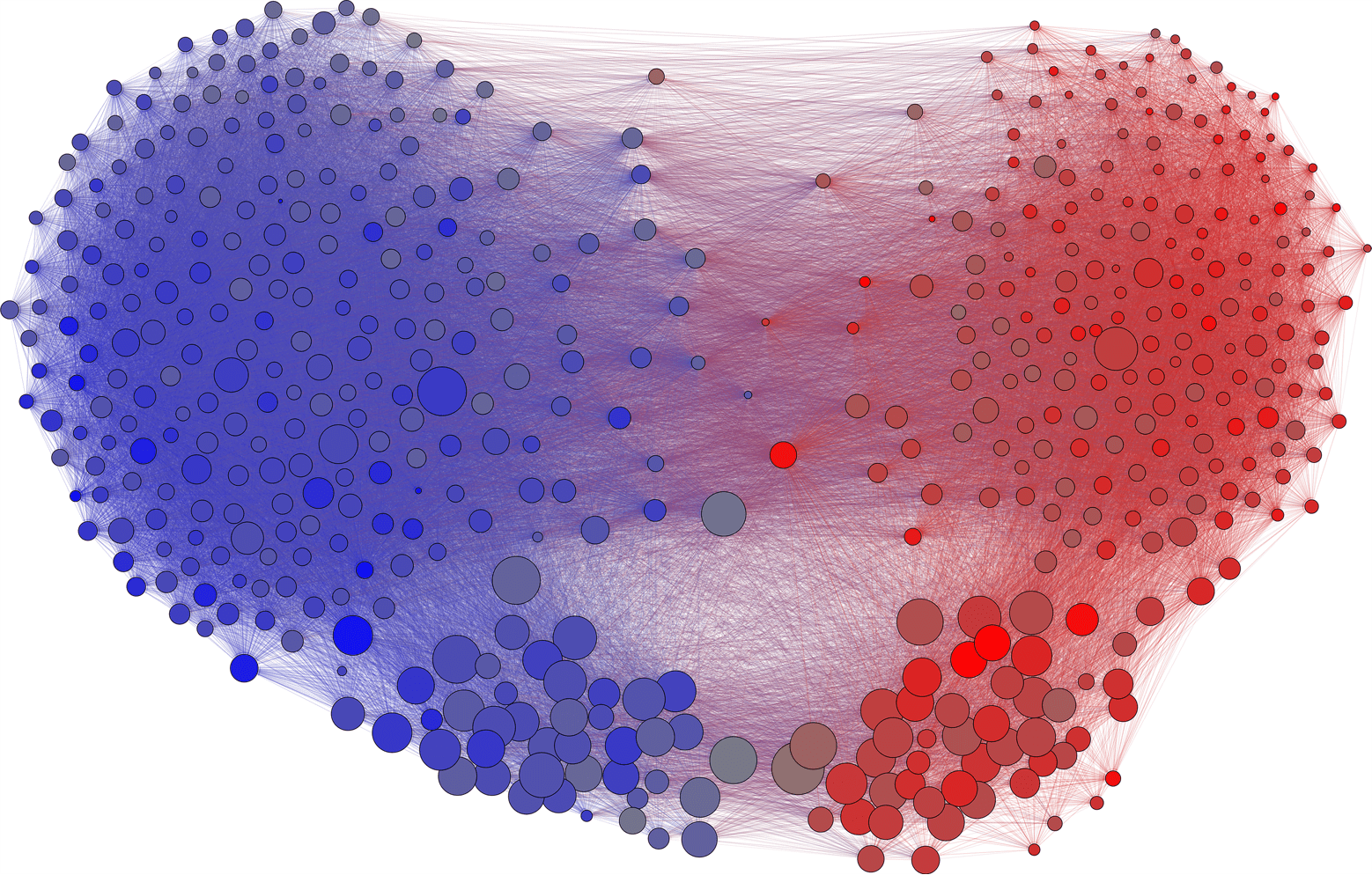}
        \caption{\fpr{}} \label{fig:congress_fair}
    \end{subfigure}
    \begin{subfigure}{0.32\columnwidth}
        \centering
        \includegraphics[width=\columnwidth]{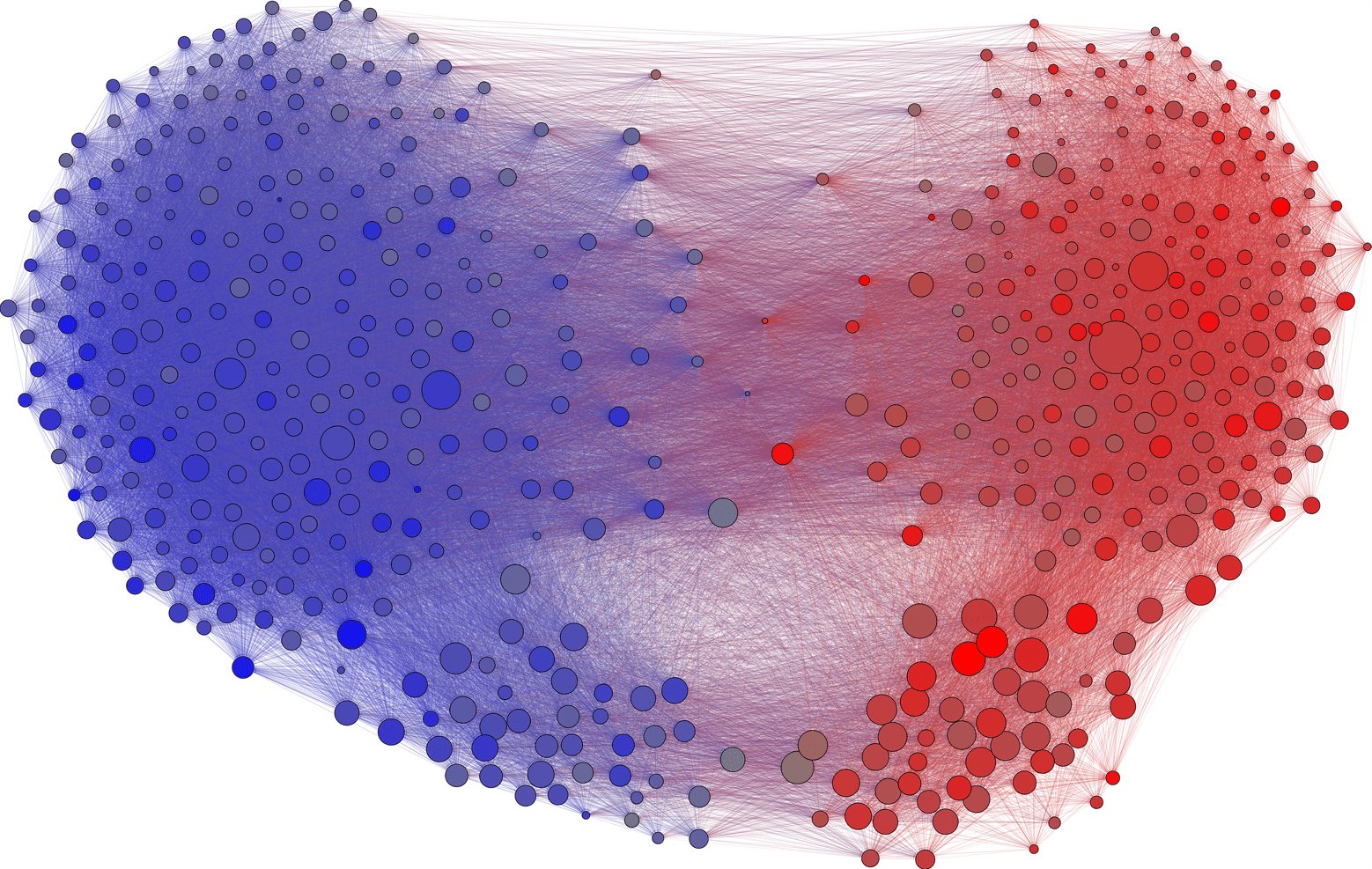}
        \caption{\pr{}} \label{fig:congress_standard}
    \end{subfigure}
    \begin{subfigure}{0.32\columnwidth}
        \centering
        \includegraphics[width=\columnwidth]{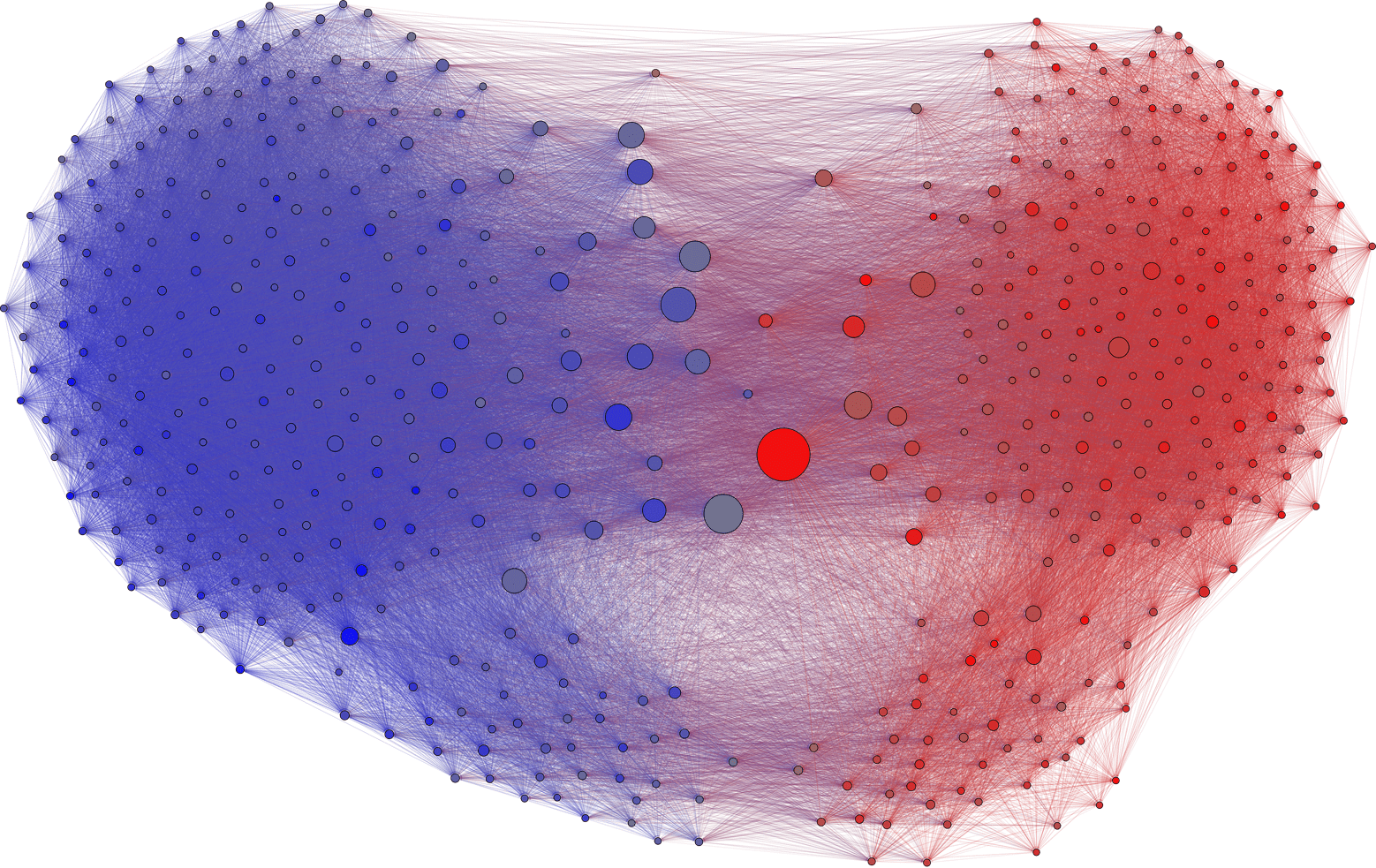}
        \caption{\dbc{}} \label{fig:congress_betweenness}
    \end{subfigure}
    \caption{Congress Graph. Layout generated by the ForceAtlas2 algorithm~\cite{FA2}. Nodes are colored based on polarities ($\mathbf{q}^{(i)}$).}
    \label{fig:congress}
\end{figure}

We create a graph that models all members of the $116^{\text{th}}$ United States Congress (2019--20) linked by their connections on Twitter. Each Congress member with an official Twitter account is represented as a node. A directed edge from member X to member Y exists if X was following Y on Twitter as of May 30, 2020.\footnote{Two members in the graph did not follow any other members. In order to prevent their \pr{} values from being artificially inflated, edges are added from each of the two sinks to all nodes in the graph.}

There are $K=2$ communities, and for node $i$, the affiliation vector $\mathbf{q}^{(i)}=\left[ b_i, r_i \right]$ measures how much the Congress member's ideologies align with the Democrats and Republicans respectively. To generate this vector, we use the DW-NOMINATE procedure that provides a quantitative measure of their ideologies based on factors such as voting behavior~\cite{DWN}. For each member, this method produces a number between $-1$ (most liberal) and $1$ (most conservative), which we scale linearly to fit our notion in Section~\ref{sec:model}.

When comparing \fpr{} with \pr{} and \dbc{} on the Congress Graph, as shown in Figure~\ref{fig:congress}, 
we observe a \emph{bridging effect}:
\begin{quote}
The \fpr{} model favors nodes that are not only bipartisan and influential (in terms of \pr{}), but that also serve as a bridge to similarly bipartisan and influential nodes in other communities.
\end{quote}

Specifically, while the graph in Figure~\ref{fig:congress} is clearly separated into two polarized clusters, each of them contain a sub-cluster shown on the bottom left and bottom right. The sub-clusters highly correlate to the Senate, with the main cluster largely being the House. In general, senators from both parties are well-connected to each other; each senator is connected to a large number of representatives from their own party and a smaller number from the other party; edges directly between representatives of different parties exist, but are less common. This means the Senate sub-clusters serve as bridging nodes between the two clusters, since two representatives from different parties are typically connected via one or more senators.

While both Figures~\ref{fig:congress_fair}~and~\ref{fig:congress_standard} show the Senate sub-clusters ranked above average, the difference between the Senate and the House shown by \fpr{} in Figure~\ref{fig:congress_fair} is much more pronounced, as both sub-clusters receive much higher scores here. Figure~\ref{fig:congress_betweenness}, on the other hand, shows \dbc{} ranks the Senate similarly to the House. Instead, the measure appears to favor nodes with high outdegrees, as the highly ranked members follow most or all other members of Congress on Twitter. While these nodes can serve as a bridge across the two communities, they are not necessarily bipartesan and influential themselves.

\begin{table}[htbp]
    \centering
    \begin{tabular}{|c|c c c|} 
        \hline
        \multicolumn{4}{|c|}{Senators} \\
        \hline
        Node & \fprs{} & \prs{} & \dbcs{} \\
        \hline\hline
        Susan M. Collins  & $1$ & $10$ & $152$  \\
        Lisa Murkowski  & $6$ & $20$ & $224$  \\
        Kyrsten Sinema  & $9$ & $23$ & $2$  \\
        Joe Manchin, III  & $5$ & $64$ & $198$  \\
        Rob Portman  & $11$ & $8$ & $30$  \\
        \hline
    \end{tabular}
    \caption{Top 5 senators 
   joining greatest proportion of bipartisan bills, with their ranks by \fpr{}, \pr{}, and \dbc{} respectively.}
    \label{table:congress_bipartisan}
\end{table}

To verify that the \fpr{} indeed finds the most important bipartisan members of the Congress, we consider the top five senators 
with the highest percentage of bills cosponsored that were introduced by the other party, an independent metric that serves as a proxy of bipartisanship of members~\cite{bip}. As shown in Table~\ref{table:congress_bipartisan}, 
4 out of these 5 senators are ranked significantly higher by \fpr{} (\fprs{}) than either \pr{} (\prs{}) or \dbc{} (\dbcs{}). 
Furthermore, these five senators are among top 11 in the entire graph ranked by \fpr{}; 
This indeed shows that \fpr{} prioritizes bipartisan and influential nodes in the graph. 

We can quantify this bridging effect by analyzing intersections of top nodes and the cut between two clusters. Spectral clustering is first performed on the graph (converted to an undirected graph) to identify the two polarized clusters~\cite{spectral}. Then, for each ranking algorithm and for various values of $k$, we count the number of edges in the subgraph induced by the top $k$ nodes that are also in the cut. 
Figure~\ref{fig:congress_spectral_cut} shows the results for all these centrality models.

\begin{figure}
    \centering
    \includegraphics[width=0.85\columnwidth]{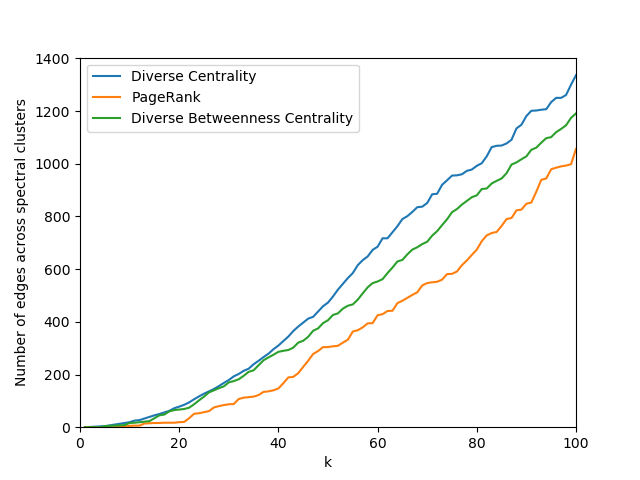}
    \caption{Number of cut edges among top $k$ nodes in Congress Graph ranked by \fpr{}, \pr{}, and \dbc{} scores. }
    \label{fig:congress_spectral_cut}
\end{figure}

The \fpr{} model has a greater number of cut edges for any given value of $k$. This shows that \fpr{} has a greater focus than \pr{} on bridging nodes that connect the two polarized clusters, instead of central nodes within each community which may not be well connected to each other. \fpr{} beats \dbc{} in this metric too, even though the latter emphasizes this bridging effect in its definition.

\subsection{Political Blogs Graph}

We also ran the algorithms on a graph that models blogs on U.S. politics in 2005~\cite{blog}. The dataset is a directed graph that forms edges among blogs based on hyperlinks.
\footnote{Four nodes are not connected the rest of the graph, so we removed them and other disconnected nodes to ensure spectral clustering finds a good cut.} 

We again assume $K=2$ communities in the graph. Since the dataset does not contain fine-grained polarity data of the blogs, for our adaptation, all liberal blogs are assigned $r_i = 0.01, b_i = 0.99$, and all conservative blogs have $r_i = 0.99, b_i = 0.01$. These values are approximately discrete, denoting community memberships rather than extent of affiliation to each community.


We can again analyze the number of cut edges among top nodes to quantify this bridging effect, as shown in Figure~\ref{fig:political_blogs_spectral_cut}. For any given value of $k$, \fpr{} gives the greatest number of cut edges across the two clusters than \pr{}, similar to the Congress Graph. Therefore, on this graph, \fpr{} still favors nodes on the frontier of the two clusters bridging them together. This example also shows that \fpr{} still performs well when nodes have extreme affiliation vectors (in the sense of being nearly 0 or 1).

\begin{figure}
    \centering
    \includegraphics[width=0.85\columnwidth]{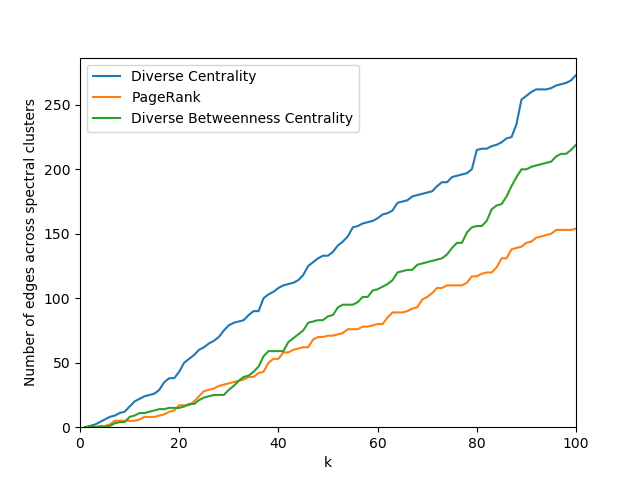}
    \caption{Number of cut edges between top $k$ nodes in Political Blogs Graph ranked by \fpr{}, \pr{}, and \dbc{} scores.}
    \label{fig:political_blogs_spectral_cut}
\end{figure}

\section{Conclusion}
In this paper, we defined a new measure of centrality, \fpr{}, that gives importance to nodes that bridge different communities. We propose several elementary criteria for such a centrality measure with diversity and show via simulations on realistic random networks that \fpr{} satisfies all the proposed criteria, while simple modifications to standard \pr{} do not.  We further show how important bipartisan senators are naturally unearthed by our measure when run on a social network constructed from the United States Congress. 

Our work leads to several open questions. First, we do not have a complete theoretical understanding of \fpr{}. Is the fixed point unique under reasonable assumptions on the network? Can it be proven that our iterative algorithm for computing the fixed point always converges? Further, in several real-world settings, there could be several different types of communities, corresponding for instance to politics, demographics, religion, location, etc. An individual could belong to one or more communities for each of these dimensions. How can we generalize the notion of \fpr{} to handle this setting, and how does it impact the performance of the resulting algorithms on real data?

\section*{Acknowledgments}
This work is supported by NSF grants CCF-1637397, ONR award N00014-19-1-2268; and DARPA award FA8650-18-C-7880.

\bibliographystyle{abbrv}
\bibliography{refs.bib}

\end{document}